%
\documentclass[runningheads]{llncs}
\usepackage[T1]{fontenc}
%
\usepackage{graphicx}
\usepackage{amsmath,amssymb}
\usepackage{complexity}
\usepackage{hyperref}
\usepackage{enumitem} 
\usepackage{todonotes} 
\usepackage{fullpage}
%

\newtheorem{observation}{Observation}

\DeclareMathOperator{\preserver}{\mathsf{Preserver}}
\DeclareMathOperator{\destroyer}{\mathsf{Destroyer}}
\DeclareMathOperator{\head}{head}
\DeclareMathOperator{\tail}{tail}

\DeclareMathOperator{\ar}{ar}
\newcommand{\vsat}{\textsc{Planar} $(\le\!3, 3)$-$\mathsf{SAT}$}

\newcommand{\eps}{\varepsilon}

\newcommand{\chim}[1]{\overline{\chi}'_{#1}}

\begin{document}
\title{Maximum edge colouring problem on graphs that exclude a fixed minor\thanks{Supported by project 22-17398S (Flows and cycles in graphs on surfaces) of Czech Science Foundation}}
%
\author{Zden\v {e}k Dvo\v {r}\'ak\inst{1}
\and Abhiruk Lahiri\inst{2}}
\authorrunning{Z. Dvo\v {r}\'ak and A. Lahiri}
%
\institute{Charles University, Prague 11800, Czech Republic\\ 
\email{rakdver@iuuk.mff.cuni.cz} \and
Heinrich Heine University, D\"{u}sseldorf 40225, Germany\\
\email{abhiruk@hhu.de}}
\maketitle              
\begin{abstract}
The maximum edge colouring problem considers the maximum colour assignment to edges of a graph under the condition that every vertex has at most a fixed number of distinct coloured edges incident on it. 
If that fixed number is $q$ we call the colouring a maximum edge $q$-colouring. 
The problem models a non-overlapping frequency channel assignment question on wireless networks. 
The problem has also been studied from a purely combinatorial perspective in the graph theory literature.

We study the question when the input graph is sparse. 
We show the problem remains $\NP$-hard on $1$-apex graphs. 
We also show that there exists $\PTAS$ for the problem on minor-free graphs.
The $\PTAS$ is based on a recently developed Baker game technique for proper minor-closed classes,
thus avoiding the need to use any involved structural results.
This further pushes the Baker game technique beyond the problems expressible in the first-order logic.

\keywords{Polynomial-time approximation scheme  \and Edge colouring \and Minor-free graphs.}
\end{abstract}
\section{Introduction}
For a graph $G = (V, E)$, an \emph{edge} $q$-\emph{colouring} of $G$ is a mapping $f \colon E(G) \to \mathbb{Z}^+$ such that the number of distinct colours incident on any vertex $v \in V(G)$ is bounded by $q$,
and the \emph{spread} of $f$ is the total number of distinct colours it uses.
The \emph{maximum edge} $q$-\emph{chromatic number} $\chim{q}(G)$ of $G$ is the maximum spread of an edge $q$-colouring of $G$.

A more general notion has been studied in the combinatorics and graph theory communities in the context of extremal problems, called \emph{anti-Ramsey number}.
For given graphs $G$ and $H$, the \emph{anti-Ramsey number} $\ar(G, H)$ denotes the maximum number of colours that can be assigned to edges of $G$ so that there does not exist any subgraph isomorphic to $H$ which is \emph{rainbow}, i.e., all the edges of the subgraph receive distinct colours under the colouring. 
The maximum edge $q$-chromatic number of $G$ is clearly equal to $\ar(G, K_{1, q+1})$, where $K_{1, q+1}$ is a star with $q+1$ edges.

The notion of anti-Ramsey number was introduced by Erd{\H{o}}s and Simonovits in 1973 \cite{Erdos}. 
The initial studies focused on determining tight bounds for $\ar(G, H)$. 
A lot of research has been done on the case when $G = K_n$, the complete graph, and $H$ is a specific type of a graph (a path, a complete graph, \ldots)~\cite{MontelN02,Schier04,Simono}. 
For a comprehensive overview of known results in this area, we refer interested readers to~\cite{FujitaMO}.
Bounds on $\ar(K_n, H)$ where $H$ is a star graph are reported in~\cite{Jiang02,ManousSTV96}. 
Gorgol and Lazuka computed the exact value of $\ar(K_n,H)$ when $H$ is  $K_{1,4}$ with an edge added  to it \cite{GorgolL10}. 
For general graph $G$, Montellano-Ballesteros studied $\ar(G, K_{1,q})$ and reported an upper bound~\cite{Montel06}. 

The algorithmic aspects of this problem started gaining attention from researchers around fifteen years ago, due to its application to wireless networks~\cite{Raniwala}.
At that time there was a great interest to increase the capacity of \emph{wireless mesh networks} (which are commonly called \emph{wireless broadband} nowadays).
The solution that became the industry standard is to use multiple channels and transceivers with the ability to simultaneously communicate with many neighbours using multiple radios over the channels~\cite{Raniwala}.
Wireless networks based on the IEEE 802.11a/b/g and 802.16 standards are examples of such systems.
But, there is a physical bottleneck in deploying this solution. 
Enabling every wireless node to have multiple radios can possibly create an interface and thus reduce reliability. 
To circumvent that, there is a limit on the number of channels simultaneously used by any wireless node.
In the IEEE 802.11 b/g standard and IEEE 802.11a standard, the numbers of permittable simultaneous channels are $3$ and $12$, respectively~\cite{WanCWY11}. 

If we model a wireless network as a graph where each wireless node corresponds to a vertex of the graph, then the problem can be formulated as a maximum edge colouring problem. 
The nonoverlapping channels can be associated with distinct colours. 
On each vertex of the graph, the number of distinctly coloured edges allowed to be incident on it captures the limit on the number of channels that can be used simultaneously at each wireless node.
The question of how many channels can be used simultaneously by a given network translates into the number of colours that can be used in a maximum edge colouring.

Devising an efficient algorithm for the maximum edge $q$-colouring problem is not an easy task. 
In \cite{AdamaszekP10}, the problem is reported $\NP$-hard for every $q \geq 2$. 
The authors further showed that the problem is hard to approximate within a factor of $(1+ \frac{1}{q})$ for every $q \geq 2$, assuming the \emph{unique games conjecture}~\cite{AdamP}. 
A simple $2$-approximation algorithm for the maximum edge $2$-colouring problem is reported in \cite{Feng}. 
The same algorithm from \cite{Feng} has an approximation ratio of $5/3$ with the additional assumption that the graph has a perfect matching~\cite{AdamP}. 
It is also known that the approximation ratio can be improved to $8/5$ if the input graph is assumed to be triangle-free~\cite{ChandranLS22}.
An almost tight analysis of the algorithm is known for the maximum edge $q$-colouring problem $(q \geq 3)$ when the input graph satisfies certain degree constraints~\cite{Chandran}. 
The $q=2$ case is also known to be fixed-parameter tractable~\cite{GoyalKM13}.

In spite of several negative theoretical results, the wireless network question continued drawing the attention of researchers due to its relevance in applications.
There are several studies focusing on improving approximation under further assumptions on constraints that are meaningful in practical applications~\cite{KodialamN05}, \cite{SenMGB07}, \cite{WanCWY11}, \cite{WanAJWX15}. 
This motivates us to study the more general question on a graph class that captures the essence of wireless mesh networks.
Typically, disk graphs and unit disk graphs are well-accepted abstract models for wireless networks.
But they can capture more complex networks than what a real-life network looks like~\cite{SenMGB07}.
By definition, both unit disk graphs and disk graphs can have arbitrary size cliques.
In a practical arrangement of a wireless mesh network, it is quite unlikely to place too many wireless routers in a small area.
In other words, a real-life wireless mesh network can be expected to be fairly sparse and avoid large cliques.
In this paper, we focus on a popular special case of sparse networks, those avoiding a fixed graph as a minor.
In particular, this includes the graphs that can be made planar by deletion of a bounded number $k$ of vertices (the \emph{$k$-apex graphs}).

From a purely theoretical perspective, the graphs avoiding a fixed minor are interesting on their own merit.
Famously, they admit the structural decomposition devised by Robertson and Seymour~\cite{RobertsonS03a},
but also have many interesting properties that can be shown directly, such as the existence of
sublinear separators~\cite{alon1990separator} and admitting layered decomposition into pieces of bounded weak diameter~\cite{klein1995}.
They have been also intensively studied from the algorithmic perspective, including the $\PTAS$ design.
Several techniques for this purpose have been developed over the last few decades. 
The \emph{bidimensionality technique} bounds the treewidth of the graph in terms of the size of the optimal solution and uses the balanced separators to obtain the approximation factor~\cite{DemaineH05,FominLRS11}.
A completely different approach based on local search is known for unweighted problems~\cite{CabelloG15,Har-PeledQ17}.
Dvo\v{r}{\'{a}}k used thin systems of overlays~\cite{Dvorak18} and a generalization of Baker's layering approach~\cite{baker1994approximation,Dvorak20}
to obtain $\PTAS$es for a wide class of optimization problems expressible in the first-order logic and its variations.

\subsection{Our results}
Our contribution is twofold. 
First, we show that the maximum edge $q$-colouring problem is $\NP$-hard on $1$-apex graphs. 
Our approach is similar in spirit to the approximation hardness reduction for the problem on general graphs~\cite{AdamaszekP10}.  

Secondly, we show that there exists a $\PTAS$ for the maximum edge $q$-colouring problem for graphs avoiding a fixed minor.
The result uses the \emph{Baker game} approach devised in~\cite{Dvorak20}, avoiding the use of involved structural results.
The technique was developed to strengthen and simplify the results of~\cite{DawarGKS06} giving $\PTAS$es for monotone optimization problems expressible in the first-order logic.  
Our work demonstrates the wider applicability of this technique to problems not falling into this framework.

\section{Preliminaries}

A graph $H$ is a \emph{minor} of a graph $G$ if a graph isomorphic to $H$ can be obtained from a subgraph of $G$ by a series of edge contractions.  
We say that $G$ is \emph{$H$-minor-free} if $G$ does not contain $H$ as a minor.
A graph is called \emph{planar} if it can be drawn in the plane without crossings.
A graph $G$ is a \emph{$k$-apex graph} if there exists a set $A\subseteq V(G)$ of size at most $k$ such that $G-A$ is planar.
The $k$-apex graphs are one of the standard examples of graphs avoiding a fixed minor; indeed, they are $K_{k+5}$-minor-free.

Given a function $f$ assigning colours to edges of a graph $G$ and a vertex $v\in V(G)$, we write $f(v)$ to denote the set $\{f(e) :\text{$e$ is adjacent to $v$}\}$, and $f(G)=\{f(e):e\in E(G)\}$.  
Recall that $f$ is an edge $q$-colouring of $G$ if and only if $|f(v)|\le q$ for every $v\in V(G)$, and the maximum edge $q$-chromatic number of $G$ is
$$\chim{q}(G)=\max\{|f(G)|:\text{$f$ is an edge $q$-colouring of $G$}\}.$$

A \emph{matching} in a graph $G$ is a set of edges of $G$ where no two are incident with the same vertex.
A matching $M$ is \emph{maximal} if it is not a proper subset of any other matching.
Note that a maximal matching is not necessarily the largest possible.
Let $|G|$ denote $|V(G)|+|E(G)|$.
For all other definitions related to graphs not defined in this article, we
refer readers to any standard graph theory textbook, such as~\cite{Diestel}.

\section{\PTAS~for minor-free graphs}

Roughly speaking, we employ a divide-and-conquer approach to approximate $\chim{q}(G)$, splitting $G$ into vertex disjoint parts $G_1$, \ldots, $G_m$ in a suitable way, solving the problem for each part recursively, and combining the solutions. 
An issue that we need to overcome is that it may be impossible to compose the edge $q$-colourings,
e.g., if an edge $(v_1,v_2)$ joins distinct parts and disjoint sets of $q$ colours are used on the neighbourhoods of $v_1$ and $v_2$ already.  
To overcome this issue, we reserve the colour $0$ to be used to join the ``boundary'' vertices.  
This motivates the following definition.

For a set $S$ of vertices of a graph $G$, an edge $q$-colouring $f$ is \emph{$S$-composable} if $|f(v)\setminus\{0\}|\le q-1$ for every $v\in S$.  Let $\chim{q}(G,S)$ denote the maximum number of non-zero colours that can be used by an $S$-composable edge $q$-colouring of $G$.
Let us remark that $G$ has an $S$-composable edge $q$-colouring using any non-negative number $k'\le \chim{q}(G,S)$ of non-zero colours, as all edges of any colour $c\neq 0$ can be recoloured to $0$.

\begin{observation}
\label{obs:max-q-col-with-saturated-col}
For any graph $G$, we have $\chim{q}(G)=\chim{q}(G,\emptyset)$, and $\chim{q}(G,$ $S)\le \chim{q}(G)$ for any $S\subseteq V(G)$.
\end{observation}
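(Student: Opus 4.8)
The plan is to derive both statements from one elementary principle: renaming a colour class---in particular, relabelling the reserved colour $0$ to a fresh positive value---leaves every set $f(v)$ unchanged as a partition of the incident edges, and hence preserves the defining inequality $|f(v)|\le q$ of an edge $q$-colouring; moreover it increases the number of positive colours by at most one. Throughout I treat an $S$-composable colouring as an edge $q$-colouring that is additionally permitted to use the colour $0$.

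First I would establish $\chim{q}(G)=\chim{q}(G,\emptyset)$ by proving the two inequalities separately. For $\chim{q}(G,\emptyset)\ge\chim{q}(G)$, take an optimal edge $q$-colouring $f$ attaining $\chim{q}(G)$; since its codomain is $\mathbb{Z}^+$ it never uses $0$, and the composability requirement $|f(v)\setminus\{0\}|\le q-1$ is imposed on no vertex when $S=\emptyset$, so $f$ is $\emptyset$-composable and uses $\chim{q}(G)$ non-zero colours. For the reverse inequality, take an optimal $\emptyset$-composable colouring $f$ using $k=\chim{q}(G,\emptyset)$ non-zero colours; relabelling all $0$-coloured edges (if any) to one new positive colour keeps every $f(v)$ of size at most $q$ and yields a genuine edge $q$-colouring whose number of colours is $k$ or $k+1$, in either case at least $k$, so $\chim{q}(G)\ge k$.

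For the inequality $\chim{q}(G,S)\le\chim{q}(G)$ I would argue by monotonicity in $S$: the constraint $|f(v)\setminus\{0\}|\le q-1$ is imposed on more vertices as $S$ grows, so every $S$-composable colouring is also $\emptyset$-composable. Consequently the family of $S$-composable colourings is contained in that of $\emptyset$-composable ones, the maximum attainable number of non-zero colours cannot increase, and we obtain $\chim{q}(G,S)\le\chim{q}(G,\emptyset)=\chim{q}(G)$ using the first part.

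I do not anticipate a substantial obstacle; the proof is essentially bookkeeping around the reserved colour $0$. The only points that require a little care are checking that relabelling $0$ to a fresh positive colour neither violates any $|f(v)|\le q$ constraint nor decreases the colour count, and noting explicitly that $\emptyset$-composability is a vacuous condition, so that ordinary edge $q$-colourings and $\emptyset$-composable colourings coincide up to the irrelevant availability of the unused colour $0$.
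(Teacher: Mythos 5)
Your proof is correct; the paper in fact states this observation without any proof, treating it as immediate from the definitions. Your argument is exactly the routine verification the authors leave implicit -- vacuity of the composability constraint for $S=\emptyset$, harmless relabelling of the colour $0$ to a fresh positive colour, and monotonicity of the constraint set in $S$ -- so there is nothing to add or correct.
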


We need the following approximation for $\chim{q}(G,S)$ in terms of the size of a maximal matching, analogous to one for edge $2$-colouring given in~\cite{Feng}.  
Let us remark that the $S$-composable edge $q$-colouring problem is easy to solve for $q=1$, since we have to use colour $0$ on all edges of each component intersecting $S$ and we can use a distinct colour for all edges of any other component.  
Consequently, in all further claims, we assume $q\ge 2$.

\begin{observation}
\label{obs:sat-col-matching}
For any graph $G$, any $S\subseteq V(G)$, any maximal matching $M$ in $G$, and any $q\ge 2$,
$$|M| \leq \chim{q}(G,S) \leq \chim{q}(G)\le 2q|M|.$$
\end{observation}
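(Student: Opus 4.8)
The statement bundles three inequalities, which I would handle separately. The middle one, $\chim{q}(G,S)\le\chim{q}(G)$, is nothing new: it is exactly the second part of Observation~\ref{obs:max-q-col-with-saturated-col}, so I would simply invoke it.

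For the lower bound $|M|\le\chim{q}(G,S)$, the plan is to exhibit an explicit $S$-composable edge $q$-colouring using $|M|$ non-zero colours. I would give each edge of $M$ its own distinct non-zero colour and colour every remaining edge with $0$. Because $M$ is a matching, each vertex $v$ is incident with at most one edge of $M$, so $v$ sees at most one non-zero colour together with (possibly) the colour $0$; hence $|f(v)|\le 2\le q$, confirming that $f$ is an edge $q$-colouring, and $|f(v)\setminus\{0\}|\le 1\le q-1$, confirming $S$-composability (here I use the standing assumption $q\ge 2$). Since the $|M|$ matching edges carry pairwise distinct non-zero colours, this colouring witnesses $\chim{q}(G,S)\ge|M|$.

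For the upper bound $\chim{q}(G)\le 2q|M|$, the key observation is that the set $V(M)$ of the $2|M|$ vertices covered by $M$ is a vertex cover: if some edge $e$ had both endpoints outside $V(M)$, then $M\cup\{e\}$ would be a strictly larger matching, contradicting the maximality of $M$. Consequently every edge of $G$, and therefore every colour used by an edge $q$-colouring $f$, is seen by at least one vertex of $V(M)$, so $f(G)\subseteq\bigcup_{v\in V(M)}f(v)$. Since $f$ is an edge $q$-colouring we have $|f(v)|\le q$ for each $v$, and summing over the $2|M|$ vertices of $V(M)$ yields $|f(G)|\le q\,|V(M)|=2q|M|$. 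Taking the maximum over all edge $q$-colourings of $G$ gives the claim.

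The argument is elementary throughout; the only place that requires a genuine idea rather than bookkeeping is the upper bound, where the maximality of $M$ must be converted into the vertex-cover property, after which the counting is immediate. Everything else is a direct verification of the colouring axioms.
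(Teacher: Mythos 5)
Your proof is correct and follows essentially the same route as the paper: the same explicit colouring (distinct non-zero colours on $M$, colour $0$ elsewhere) for the lower bound, and the same vertex-cover argument from the maximality of $M$ for the upper bound. You merely spell out the composability check and the counting in slightly more detail than the paper does.
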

\begin{proof}
We can assign to each edge of $M$ a distinct positive colour and to all other edges (if any) the colour $0$, obtaining an $S$-composable edge $2$-colouring using $|M|$ non-zero colours.  
On the other hand, consider the set $X$ of vertices incident with the edges of $M$.  
By the maximality of $M$, the set $X$ is a vertex cover of $G$, i.e., each edge of $G$ is incident with a vertex of $X$, and thus at most $q|X|=2q|M|$ colours can be used by any edge $q$-colouring of $G$.\qed
\end{proof}

In particular, as we show next, the lower bound implies that the $S$-composable edge $q$-colouring problem is fixed-parameter tractable when parameterized by the value of the solution
(a similar observation on the maximum edge $2$-colouring is reported in~\cite{GoyalKM13}).

\begin{observation}\label{obs:fpt}
There exists an algorithm that, given a graph $G$, a set $S\subseteq V(G)$, and integers $q\ge 2$ and $s$, in time $O_{q,s}(|G|)$ returns an $S$-composable edge $q$-colouring of $G$ using at least $\min(\chim{q}(G,S),s)$
colours.
\end{observation}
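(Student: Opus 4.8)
The plan is to greedily compute a maximal matching $M$ in time $O(|G|)$ and then branch on how $|M|$ compares with $s$. By Observation~\ref{obs:sat-col-matching} we have $|M|\le\chim{q}(G,S)\le 2q|M|$, so the size of $M$ already reveals whether the optimum is large or small relative to $s$. If $|M|\ge s$, I would simply output the colouring from the lower bound in Observation~\ref{obs:sat-col-matching}: give each edge of $M$ a distinct non-zero colour and colour $0$ on every other edge. Every vertex meets at most one edge of $M$, so $|f(v)\setminus\{0\}|\le 1\le q-1$ for all $v$; this colouring is therefore $S$-composable and uses $|M|\ge s\ge\min(\chim{q}(G,S),s)$ non-zero colours, as required.

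The substantive case is $|M|<s$. Then the set $X$ of endpoints of $M$ is a vertex cover with $|X|=2|M|<2s$, and Observation~\ref{obs:sat-col-matching} bounds the optimum by $\chim{q}(G,S)\le 2q|M|<2qs$, a constant depending only on $q$ and $s$. My goal then reduces to computing $\chim{q}(G,S)$ exactly, together with a realizing colouring. I would kernelize using the vertex cover: since $V(G)\setminus X$ is independent, each vertex $v\notin X$ is fully described by the pair $(N(v),[v\in S])$ with $N(v)\subseteq X$, giving at most $2\cdot 2^{|X|}<2^{2s+1}$ \emph{types}. For each type I keep only a bounded number $r=O_{q,s}(1)$ of \emph{representative} vertices, delete all remaining vertices of that type together with their incident edges, and retain all of $X$ and all edges inside $X$. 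Bucketing the vertices outside $X$ by type costs $O_{q,s}(|G|)$ time, and the resulting kernel $G'$ has $O_{q,s}(1)$ vertices and edges.

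It remains to prove $\chim{q}(G',S\cap V(G'))=\chim{q}(G,S)$ and to solve the kernel by brute force. The inequality ``$\ge$'' for $G$ is the easy direction: any $S$-composable colouring of $G'$ extends to $G$ by giving each deleted vertex the colour pattern (edge by edge into its neighbourhood) of a kept representative of the \emph{same} type. This reuses only colours already present, leaves every colour set in $X$ unchanged, and respects the composability budget because the $S$-membership bit is part of the type; hence it preserves validity, $S$-composability, and the number of colours. I expect the reverse inequality to be the main obstacle. Here I would start from an optimal colouring of $G$, restrict it to the kernel edges, and recover any colour whose witnesses were all deleted by re-placing it onto an edge from a representative to the appropriate vertex of $X$. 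Because each vertex of $X$ carries at most $q$ colours, only $O(q|X|)$ colours can be at risk, and each is tied to a specific vertex $u\in X$ that every representative of the relevant type is adjacent to; choosing $r$ large enough (e.g.\ $r=q|X|+1$) and arguing by a Hall-type assignment that the representatives supply enough distinct incident edges without overloading any single vertex or any representative in $S$ shows that no colour need be lost. Finally, since $G'$ has $O_{q,s}(1)$ edges and the optimum uses at most $2qs+1$ colours including $0$, I enumerate all colourings of $G'$ over this palette in $O_{q,s}(1)$ time, select a valid $S$-composable one of maximum spread, extend it to $G$ as above, and recolour any excess beyond $s$ colours to $0$; the result uses at least $\min(\chim{q}(G,S),s)$ colours.
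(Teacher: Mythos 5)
Your handling of the case $|M|\ge s$ is exactly the paper's. In the case $|M|<s$ you diverge: the paper notes that the vertex cover $X$ of size at most $2s-2$ bounds the treewidth of $G$ and then invokes Courcelle's theorem on a monadic second-order formula describing $S$-composable edge $q$-colourings with at most $s$ non-zero colours, whereas you propose a kernelization to a graph of size $O_{q,s}(1)$ followed by brute force. Your route is more elementary and avoids Courcelle's theorem, and the overall plan is sound, but the step you yourself flag as the crux --- showing $\chim{q}(G',S\cap V(G'))\ge\chim{q}(G,S)$ --- is not established by the argument you sketch. Re-placing a lost colour $c$ onto an existing edge $(w,u)$ of a representative $w$ has two failure modes that a ``Hall-type assignment'' of distinct representatives to distinct lost colours does not address: (i) $w$ may already carry $q$ distinct colours on its other edges (it can have up to $|X|$ neighbours), so overwriting one edge with $c$ yields $|f(w)|=q+1$, and recolouring $w$'s remaining edges to $0$ or to $c$ to compensate can introduce a new colour at $w$'s other neighbours $u'$ and push $|f(u')|$ above $q$ (colour $0$ is not free at vertices outside $S$); and (ii) the colour previously on $(w,u)$ may thereby lose its only remaining witness, so the repair can cascade.

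Fortunately your kernel is correct and there is a clean way to close the gap that needs no recolouring at all: vertices of the same type are interchangeable, so $G'$ is determined up to isomorphism by \emph{how many} vertices of each type are kept, not by \emph{which} ones. Fix an optimal $S$-composable edge $q$-colouring $f$ of $G$ and, for each non-zero colour, one witness edge; the witness edges meeting $V(G)\setminus X$ involve at most $q|X|<2qs$ vertices outside $X$, since there are at most $2q|M|=q|X|$ non-zero colours in total. Taking $r\ge q|X|$ (so $r=2qs$ suffices uniformly), you may choose the representatives of each type to contain all of these witness vertices; the restriction of $f$ to the subgraph induced by $X$ together with the chosen representatives is then automatically a valid $(S\cap V(G'))$-composable colouring (colour sets only shrink under restriction to an induced subgraph) and still exhibits every colour. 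With that substitution your argument goes through; the easy direction via copying a representative's colour pattern onto its deleted twins, the $O_{q,s}(1)$ bounds on the kernel and on the palette, and the linear-time bucketing are all fine as written.
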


\begin{proof}
We can in linear time find a maximal matching $M$ in $G$.  If $|M|\ge s$, we
return the colouring that gives each edge of $M$ a distinct non-zero colour and all other edges colour $0$.  
Otherwise, the set $X$ of vertices incident with $M$ is a vertex cover of $G$ of size at most $2s-2$, and thus $G$ has treewidth at most $2s-2$.  
Note also that for any $s'$, there exists a formula $\varphi_{s',q}$ in monadic second-order logic such that $G,S,E_0,\ldots, E_{s'}\models \varphi_{s',q}$
if and only if $E_0$, \ldots, $E_{s'}$ is a partition of the edges of $G$ with all parts except possibly for $E_0$ non-empty such that the function $f$ defined by letting $f(e)=i$ for each $i\in\{0,\ldots,s'\}$ and $e\in E_i$ is an $S$-composable edge $q$-colouring of $G$.  
Therefore, we can find an $S$-composable edge $q$-colouring of $G$ with the maximum number $s'\le s$ of non-zero colours using Courcelle's theorem~\cite{courcelle} in time $O_{q,s}(|G|)$.\qed
\end{proof}

A \emph{layering} of a graph $G$ is a function $\lambda \colon V(G) \to \mathbb{Z}^+$ such that $|\lambda(u) - \lambda(v)| \leq 1$ for every edge $(u, v) \in E(G)$. 
In other words, the graph is partitioned into layers $\lambda^{-1}(i)$ for $i \in \mathbb{Z}^+$ such that edges of $G$ only appear within the layers and between the consecutive layers.
Baker~\cite{baker1994approximation} gave a number of $\PTAS$es for planar graphs based on the fact that in a layering of a connected planar graph according to the distance from a fixed vertex, the union of a constant number of consecutive layers induces a subgraph of bounded treewidth.  
This is not the case for graphs avoiding a fixed minor in general, however, a weaker statement expressed in terms of Baker game holds.  
We are going to describe that result in more detail in the following subsection.
Here, let us state the key observation that makes layering useful for approximating the edge $q$-chromatic number.

For integers $r\ge 2$ and $m$ such that $0\le m\le r-1$, the \emph{$(\lambda,r,m)$-stratification} of a graph $G$
is the pair $(G',S')$ such that
\begin{itemize}
\item $G'$ is obtained from $G$ by deleting all edges $uv$ such that $\lambda(u)\equiv m\pmod r$ and $\lambda(v)\equiv m+1\pmod r$, and
\item $S'$ is the set of vertices of $G$ incident with the edges of $E(G)\setminus E(G')$.
\end{itemize}

\begin{lemma}
\label{lem:approx}
Let $G$ be a graph, $S$ a subset of its vertices, and $q,r\ge 2$ integers.  
Let $\lambda$ be a layering of $G$.
For $m\in\{0,\ldots,r-1\}$, let $(G_m,S_m)$ be the $(\lambda,r,m)$-stratification of $G$.
\begin{itemize}
\item $\chim{q}(G_m,S\cup S_m)\le \chim{q}(G,S)$ for every $m\in\{0,\ldots,r-1\}$.
\item There exists $m\in\{0,\ldots,r-1\}$ such that $\chim{q}(G_m,S\cup S_m)\ge \Bigl(1-\frac{6q}{r}\Bigr)\chim{q}(G,S)$.
\end{itemize}
\end{lemma}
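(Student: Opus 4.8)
The plan is to prove the two bullet points separately, as they go in opposite directions.

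For the first inequality, $\chim{q}(G_m,S\cup S_m)\le \chim{q}(G,S)$, I would argue that any $(S\cup S_m)$-composable edge $q$-colouring $f$ of $G_m$ can be extended to an $S$-composable edge $q$-colouring of $G$ using the same number of non-zero colours. The graph $G$ differs from $G_m$ only by the deleted edges in $E(G)\setminus E(G_m)$, all of which are incident with vertices of $S_m$. The idea is to colour every such deleted edge with the reserved colour $0$. Since $f$ is $(S\cup S_m)$-composable, we have $|f(v)\setminus\{0\}|\le q-1$ for each $v\in S_m$, so adding colour $0$ on the newly restored edges keeps $|f(v)|\le q$ at every endpoint; vertices outside $S_m$ gain no new incident edges. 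Thus we obtain a valid edge $q$-colouring of $G$ that is still $S$-composable (the constraint on $S$ is inherited unchanged) and uses exactly as many non-zero colours, giving $\chim{q}(G_m,S\cup S_m)\le\chim{q}(G,S)$.

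For the second bullet, I would take an optimal $S$-composable edge $q$-colouring $f$ of $G$ using $k=\chim{q}(G,S)$ non-zero colours and show that for some shift $m$, restricting $f$ to $G_m$ retains most colours after fixing up the boundary. The deletion of edges between layers $\equiv m$ and $\equiv m+1\pmod r$, together with the requirement that the restricted colouring be $(S\cup S_m)$-composable, may force us to ``spend'' some colours: vertices in $S_m$ must have $|f(v)\setminus\{0\}|\le q-1$, which can be violated if $f$ already used $q$ non-zero colours there. The strategy is an averaging argument over the $r$ choices of $m$. Each non-zero colour $c$ is ``charged'' to the layers in which it appears; because consecutive layers differ by at most one in $\lambda$, a colour spanning a range of layers is damaged only by the $O(1)$ shifts $m$ whose boundary slices through its support. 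Summing over all $m\in\{0,\dots,r-1\}$, the total damage is bounded by roughly $6qk$ (the factor $q$ and the constant $6$ coming from how many colours can be incident to a damaged boundary vertex and from the two boundary layers per shift), so by averaging some $m$ loses at most $\tfrac{6q}{r}k$ colours.

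The main obstacle will be making the charging in the second part precise, since the quantity we must preserve is the number of distinct colours, not edges. A single colour class can occupy many edges across many layers, so I must track, for each colour $c$ and each shift $m$, whether $c$ survives as a non-zero colour in the $(S\cup S_m)$-composable recolouring of $G_m$. I expect to define the colouring on $G_m$ by keeping $f$ on the surviving edges and recolouring to $0$ exactly those colour classes that would violate composability at some vertex of $S_m$ (i.e.\ colours incident to a vertex that the deletion pushes over the $q-1$ non-zero threshold). The crux is to show each such ``bad'' vertex lies near the boundary layers $\equiv m,m+1\pmod r$, so that across all $r$ shifts each colour is killed only a bounded number of times; bounding the per-vertex contribution by $q$ and the number of relevant boundary positions then yields the $6q/r$ loss. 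Carefully setting up this double counting—and verifying the resulting colouring really is $(S\cup S_m)$-composable after the recolouring—is where the real work lies.
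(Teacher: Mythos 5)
Your first bullet is exactly the paper's argument and is correct: extend the colouring by giving colour $0$ to the deleted edges, and composability at $S_m$ guarantees the budget of $q$ colours is not exceeded.

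The second bullet, however, has a genuine gap, and it is precisely at the point you flag as ``where the real work lies.'' Your plan rests on the claim that each non-zero colour is damaged by only $O(1)$ of the $r$ shifts, because its support is sliced by only $O(1)$ boundaries. This is false in general: a single colour class need not be confined to few layers. For instance, a colour $c$ may be used on a long path whose vertices pass through every layer (each vertex of the path sees only the colour $c$, so this is a perfectly valid edge $q$-colouring); such a colour crosses a boundary between layers $\ell\equiv m$ and $\ell+1\equiv m+1\pmod r$ for every $m$, and is therefore killed by all $r$ shifts. With that, the sum over $m$ of the number of killed colours can be as large as $rk$, and averaging yields no gain. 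Likewise, your fallback bound ``at most $q$ colours per damaged boundary vertex'' does not close the gap, since the number of boundary vertices in $S_m$, summed over $m$, is controlled only by $|E(G)|$ and not by $k$.

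The missing idea is to count through matchings rather than through colours or vertices, using both directions of Observation~\ref{obs:sat-col-matching}. Let $B_m$ be the bipartite graph on $S_m$ with edge set $E(G)\setminus E(G_m)$ and let $M_m$ be a maximal matching in $B_m$. The upper bound of the observation, applied to $B_m$, shows that the restriction of $f$ to $B_m$ uses at most $2q|M_m|$ distinct colours, so recolouring every colour that appears on $B_m$ to $0$ destroys at most $2q|M_m|$ colours while making every vertex of $S_m$ incident with a $0$-edge. The lower bound controls $\sum_m |M_m|$: partitioning $\{0,\ldots,r-1\}$ into three classes with no two values consecutive modulo $r$ makes the cut edge sets within a class vertex-disjoint, so the union of the corresponding $M_m$ is a matching of $G$ and hence has size at most $k$; thus $\sum_{m}|M_m|\le 3k$ and some $m$ satisfies $2q|M_m|\le \tfrac{6q}{r}k$. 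This two-sided use of the matching bound is what your charging scheme would need to replace, and without it the $6q/r$ loss is not established.
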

\begin{proof}
Given an $(S\cup S_m)$-composable edge $q$-colouring of $G_m$, we can assign the colour $0$ to all edges of $E(G)\setminus E(G_m)$ and obtain an $S$-composable edge $q$-colouring of $G$ using the same number of non-zero colours, which implies that $\chim{q}(G_m, S\cup S_m)\leq \chim{q}(G,S)$.

Conversely, consider an $S$-composable edge $q$-colouring $f$ of $G$ using $k=\chim{q}(G,S)$ non-zero colours.
For $m\in\{0,\ldots,r-1\}$, let $B_m$ be the bipartite graph with vertex set $S_m$ and edge set $E(G)\setminus E(G_m)$ and let $M_m$ be a maximal matching in $B_m$.  
Let $\mathcal{P}$ be a partition of the set $\{0,\ldots, r-1\}$ into at most three disjoint parts such that none of the parts contains two integers that are consecutive modulo $r$.
For each $P\in \mathcal{P}$, let $M_P=\bigcup_{m\in P} M_m$, and observe that $M_P$ is a matching in $G$.
By Observation~\ref{obs:sat-col-matching}, it follows that $k\ge |M_P|$, and thus
$$3k\ge |P|k\ge \sum_{P\in\mathcal{P}} |M_P|=\sum_{m=0}^{r-1}|M_m|.$$
Hence, we can fix $m\in\{0,\ldots,r-1\}$ such that $|M_m|\le \tfrac{3}{r}k$.
By Observation~\ref{obs:sat-col-matching}, any edge $q$-colouring of $B_m$, and in particular the restriction of $f$ to the edges of $B_m$, uses at most $2q|M_m|\le \tfrac{6q}{r}k$ distinct colours.

Let $f'$ be the edge $q$-colouring of $G$ obtained from $f$ by recolouring all edges whose colour appears on the edges of $B_m$ to colour $0$.  
Clearly $f'$ uses at least $\bigl(1-\tfrac{6q}{r}\bigr)k$ non-zero colours.
Moreover, each vertex $v\in S_m$ is now incident with an edge of colour $0$, and thus $|f'(v)\setminus\{0\}|\le q-1$.  
Therefore, the restriction of $f'$ to $E(G_m)$ is an $(S\cup S_m)$-composable edge $q$-colouring, implying that
$$\chim{q}(G_m,S\cup S_m)\ge \Bigl(1-\frac{6q}{r}\Bigr)k=\Bigl(1-\frac{6q}{r}\Bigr)\chim{q}(G,S).$$ 
\qed
\end{proof}

Hence, if $r\gg q$, then a good approximation of $\chim{q}(G_m,S\cup S_m)$ for all $m\in\{0,\ldots,r-1\}$ gives a good approximation for $\chim{q}(G,S)$.  
We will also need a similar observation for vertex deletion; here we only get an additive approximation in general, but as long as the edge $q$-chromatic number is large enough, this suffices (and if it is not, we can determine it exactly using Observation~\ref{obs:fpt}).

\begin{lemma}\label{lem:delv}
Let $G$ be a graph, $S$ a set of its vertices, and $v$ a vertex of $G$.  Let $S'=(S\setminus\{v\})\cup N(v)$.
For any integer $q\ge 2$, we have
$$\chim{q}(G,S)\ge \chim{q}(G-v,S')\ge \chim{q}(G,S)-q,$$ and in particular if $\varepsilon>0$ and $\chim{q}(G,S)\ge q/\varepsilon$, then 
$$\chim{q}(G-v,S')\ge (1-\varepsilon)\chim{q}(G,S).$$
\end{lemma}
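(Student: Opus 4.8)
The plan is to establish the two inequalities separately and then read off the displayed multiplicative estimate.

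For the easy inequality $\chim{q}(G,S)\ge \chim{q}(G-v,S')$, I would take an optimal $S'$-composable edge $q$-colouring $g$ of $G-v$ and extend it to $G$ by giving colour $0$ to every edge incident with $v$. Since $q\ge 2$, the vertex $v$ then satisfies $|f(v)\setminus\{0\}|=0\le q-1$, and each neighbour $u\in N(v)$ only acquires the colour $0$, so $|f(u)\setminus\{0\}|$ is unchanged while $|f(u)|\le(q-1)+1=q$. Because $S\setminus\{v\}\subseteq S'$, composability is preserved throughout $S$, and the number of non-zero colours is exactly that used by $g$.

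For the harder inequality $\chim{q}(G-v,S')\ge\chim{q}(G,S)-q$, I would start from an optimal $S$-composable edge $q$-colouring $f$ of $G$ using $k=\chim{q}(G,S)$ non-zero colours, set $C=f(v)\setminus\{0\}$ (so $|C|\le q$), then delete $v$ and simultaneously recolour to $0$ every edge whose colour lies in $C$, producing a colouring $f'$ of $G-v$. Controlling the colour loss is routine: any non-zero colour of $f$ outside $C$ must be realised by an edge avoiding $v$ (otherwise that colour would belong to $f(v)$, hence to $C$), and such edges retain their colour, so $f'$ uses at least $k-|C|\ge k-q$ non-zero colours.

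The main obstacle is verifying that $f'$ is genuinely $S'$-composable, and the only delicate vertices are the neighbours $u\in N(v)$: they lie in $S'$ but need not have belonged to $S$, so a priori $f(u)$ could carry $q$ non-zero colours. I would resolve this by a short case split on whether $0\in f'(u)$. If $0\in f'(u)$, then $|f'(u)\setminus\{0\}|\le|f'(u)|-1\le q-1$ is immediate from $f'$ being a valid $q$-colouring. If $0\notin f'(u)$, then every edge at $u$ in $G-v$ has $f$-colour outside $C\cup\{0\}$, whereas the deleted edge $uv$ has $f(uv)\in C\cup\{0\}$ (it is either $0$ or, being non-zero and incident with $v$, a member of $C$); hence $f(uv)$ is a colour of $f(u)$ absent from $f'(u)\subseteq f(u)$, giving $|f'(u)|\le|f(u)|-1\le q-1$. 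Vertices of $S\setminus\{v\}$ cause no trouble, since recolouring to $0$ only decreases non-zero colour counts, and $v$ itself is gone. Finally, the displayed bound follows by inserting $q\le\varepsilon\,\chim{q}(G,S)$, valid whenever $\chim{q}(G,S)\ge q/\varepsilon$, into $\chim{q}(G-v,S')\ge\chim{q}(G,S)-q$.
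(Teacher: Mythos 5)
Your proof is correct and follows essentially the same route as the paper's: extend by colour $0$ on the edges at $v$ for the first inequality, and for the second, recolour to $0$ all edges carrying a colour of $f(v)$ before deleting $v$, losing at most $q$ non-zero colours. Your case split verifying $S'$-composability at the neighbours of $v$ merely makes explicit a step the paper leaves implicit.
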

\begin{proof}
Any $S'$-composable edge $q$-colouring of $G-v$ extends to an $S$-composable edge $q$-colouring of $G$ by giving all edges incident on $v$ colour $0$, implying that $\chim{q}(G,S)\ge \chim{q}(G-v,S')$.
Conversely, any $S$-composable edge $q$-colouring of $G$ can be turned into an $S'$-composable edge $q$-colouring of $G-v$ by recolouring all edges whose colour appears on the neighbourhood of $v$ to $0$ and restricting it to the edges of $G-v$.
This loses at most $q$ non-zero colours (those appearing on the neighborhood of $v$), and thus $\chim{q}(G-v,S')\ge \chim{q}(G,S)-q$.\qed
\end{proof}

\subsection{Baker game} 

For an infinite sequence $\mathtt{r} = r_1, r_2, \dots$ and an integer $s \geq 0$, let $\tail(\mathtt{r})$ denote the sequence $r_2, r_3, \dots$ and let $\head(\mathtt{r}) = r_1$. 
\emph{Baker game} is played by two players $\destroyer$ and $\preserver$ on a pair $(G,\mathtt{r})$, where $G$ is a graph and $\mathtt{r}$ is a sequence of positive integers.  
The game stops when $V(G) = \emptyset$, and $\destroyer$'s objective is to minimise the number of rounds required to make the graph empty. 
In each round of the game, either
\begin{itemize}
\item $\destroyer$ chooses a vertex $v\in V(G)$, $\preserver$ does nothing and the game moves to the state $(G \setminus \{v\}, \tail(\mathtt{r}))$, or
\item $\destroyer$ selects a layering $\lambda$ of $G$, $\preserver$ selects an interval $I$ of $\head(\mathtt{r})$ consecutive integers and the game moves to the state $(G[\lambda^{-1}(I)],$ $\tail(\texttt{r}))$.  In other words, $\preserver$ selects $\head(\mathtt{r})$ consecutive layers and the rest of the graph is deleted.
\end{itemize}

$\destroyer$ \emph{wins} in $k$ rounds on the state $(G, \mathtt{r})$ if regardless of $\preserver$'s strategy, the game stops after at most $k$ rounds. 
As we mentioned earlier $\destroyer$'s objective is to minimise the number of rounds of this game and it is known that they will succeed if the game is played on a graph that forbids a fixed minor
(the upper bound on the number of rounds depends only on the sequence $\mathtt{r}$ and the forbidden minor, not on $G$).
\begin{theorem}[Dvo\v{r}\'ak~\cite{Dvorak20}]
\label{thm:baker}
For every graph $F$ and every sequence $\mathtt{r} = r_1, r_2, \dots$ of positive integers, there exists a positive integer $k$ such that for every graph $G$ avoiding $F$ as a minor, $\destroyer$ wins Baker game from the state $(G,\mathtt{r})$ in at most $k$ rounds.
Moreover, letting $n=|V(G)|$, there exists an algorithm that preprocesses $G$ in time $O_F(n^2)$ and then in each round determines a move for $\destroyer$ (leading to winning in at most $k$ rounds in total) in time $O_{F,\mathtt{r}}(n)$.
\end{theorem}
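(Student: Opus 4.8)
The plan is to reduce to excluded cliques and then induct on the clique size, letting the game's two move types play complementary roles. First I would observe that if $F$ has $t$ vertices then every $F$-minor-free graph is also $K_t$-minor-free (since $F$ is a minor of $K_t$ and the minor relation is transitive), so it suffices to bound the number of rounds for graphs excluding $K_t$; call this bound $k(t,\mathtt{r})$. The base case $t\le 2$ is an edgeless graph, and here $\destroyer$ exploits the fact that the layering constraint $|\lambda(u)-\lambda(v)|\le 1$ is vacuous when there are no edges: $\destroyer$ chooses an injective layering that places every vertex in its own layer. Whatever interval $I$ of $\head(\mathtt{r})$ consecutive layers $\preserver$ keeps, the surviving graph has at most $\head(\mathtt{r})$ vertices, which $\destroyer$ then removes one per round, so $k(2,\mathtt{r})\le 1+\head(\mathtt{r})$.

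For the inductive step I would have $\destroyer$ begin with a breadth-first-search layering $\lambda$ from one root per connected component (this is a single valid layering of all of $G$, since graph distances never jump across an edge). After $\preserver$ keeps an interval $I$ of $\head(\mathtt{r})$ consecutive layers, the surviving graph $H$ spans only $\head(\mathtt{r})$ BFS layers; contracting, in each component, the connected part lying below $I$ to a single vertex yields a minor $H^{+}$ of $G$ that is still $K_t$-minor-free and in which one vertex lies within distance $\head(\mathtt{r})$ of all others. Thus a single layering round turns $G$ into a $K_t$-minor-free graph of \emph{bounded radius}, and the remaining task is to empty such a graph in boundedly many further rounds, now playing with the tail sequence $\tail(\mathtt{r})$.

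The crux, and the step I expect to be the main obstacle, is that bounded radius does not by itself tame the graph. For apex-minor-free classes bounded radius forces bounded treewidth and $\destroyer$ could finish almost immediately, but a general $K_t$-minor-free graph of bounded radius can still have unbounded treewidth, the apex graphs being the canonical example. This is exactly why a single layering window is not enough and why the game must be allowed several rounds together with the induction on $t$. To overcome it I would appeal to the excluded-minor structure of $H^{+}$ to extract a set $A$ of at most $h(t)$ ``apex'' vertices whose deletion leaves an (almost) surface-embeddable graph, and have $\destroyer$ remove $A$ through $|A|$ vertex-deletion rounds. Deleting $A$ serves two purposes: the residual bounded-radius, near-embeddable graph has bounded treewidth and can be scattered and cleared as in the base case, while any $K_t$-minor in $H^{+}$ must route a branch set through $A$, so the part not meeting $A$ excludes $K_{t-1}$ and the induction hypothesis applies to it with the shortened sequence $\tail^{|A|+1}(\mathtt{r})$. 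Making this ``drop from $t$ to $t-1$'' argument precise is the delicate point, and it is where the structure theory genuinely enters.

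Finally I would carry out the bookkeeping and the algorithmic realization. Each descent in the induction on $t$ costs one layering round plus at most $h(t)$ vertex-deletion rounds, so unwinding the recursion gives a bound $k(t,\mathtt{r})$ depending only on $t$ and finitely many initial entries of $\mathtt{r}$, never on $|V(G)|$, as required. For the constructive part I would note that the BFS layerings, the apex sets, and the separators underlying the structural step are all computable in polynomial time; performing the one-off structural computation during an $O_F(n^{2})$ preprocessing phase then lets $\destroyer$ output each move—a BFS layering, or the name of the next apex or isolated vertex to delete—in $O_{F,\mathtt{r}}(n)$ time per round, matching the stated complexity.
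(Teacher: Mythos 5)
This theorem is not proved in the paper at all: it is imported verbatim from Dvo\v{r}\'ak~\cite{Dvorak20}, and the point stressed in the introduction is precisely that Destroyer's winning strategy there is obtained \emph{without} invoking the Robertson--Seymour structure theorem. So your sketch, which hinges on extracting apex sets from the excluded-minor structure, is not a reconstruction of the cited argument but a different and much heavier route. Your reduction to $K_t$-minor-free graphs, the base case $t\le 2$, and the opening move (BFS layering per component, then contracting the prefix below the kept interval to obtain a dominating vertex in a minor of $G$) are all fine and do match the spirit of~\cite{Dvorak20}.

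The inductive step, however, has two genuine gaps. First, a $K_t$-minor-free graph --- even one of bounded radius --- need not admit a set $A$ of at most $h(t)$ vertices whose deletion leaves an (almost) surface-embeddable graph: the structure theorem provides such apex sets only \emph{per torso} of a clique-sum decomposition, and the number of torsos is unbounded. Gluing linearly many one-apex pieces onto a common bounded-radius core gives a bounded-radius $K_t$-minor-free graph from which no bounded number of vertex deletions yields an embeddable graph, so the ``remove $A$ in $|A|$ rounds'' step has no bounded $A$ to remove. Second, and more fundamentally, deleting $A$ does not lower the excluded clique size: $H^{+}-A$ is of course still $K_t$-minor-free, but nothing makes it $K_{t-1}$-minor-free. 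The sentence ``any $K_t$-minor must route a branch set through $A$'' is vacuous (there are no $K_t$-minors), and to genuinely drop from $t$ to $t-1$ you would need every vertex of $A$ to be attachable to all $t-1$ branch sets of an \emph{arbitrary} $K_{t-1}$-minor of $H^{+}-A$; the structure-theorem apexes have no such domination property. The vertices that \emph{do} have it are exactly the contracted BFS-prefix vertices produced by your layering move (each is joined to a whole BFS layer that the rest of its component hangs below), and the argument in~\cite{Dvorak20} runs its induction on the number of such accumulated dominating vertices --- only boundedly many can coexist over a connected graph without creating a $K_t$-minor --- never passing through surface embeddings. This also explains the $O_F(n^2)$ preprocessing bound, which would be hard to justify if a Robertson--Seymour decomposition had to be computed first.
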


Let us now give the algorithm for approximating the edge $q$-chromatic number on graphs for which we can quickly win Baker game.
\begin{lemma}\label{lem:main}
There exists an algorithm that, given
\begin{itemize}
\item a graph $G$, a set $S\subseteq V(G)$, an integer $q\ge 2$, and
\item a sequence $\mathtt{r} = r_1, r_2, \dots$ of positive integers such that $\destroyer$ wins Baker game from the state
$(G,\mathtt{r})$ in at most $k$ rounds, and in each state that arises in the game is able to determine the move that achieves this in time $T$,
\end{itemize}
returns an $S$-composable edge $q$-colouring of $G$
using at least $\Bigl(\prod_{i=1}^k \bigl(1-\tfrac{6q}{r_i}\bigr)\Bigr)\cdot \chim{q}(G,S)$ non-zero colours, in time $O_{\mathtt{r},k,q}(|G|T)$.
\end{lemma}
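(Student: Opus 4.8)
The plan is to design a recursive algorithm that simulates $\destroyer$'s winning strategy and builds the colouring from the bottom up, proving by induction on the number $\ell\le k$ of remaining rounds that it returns an $S$-composable edge $q$-colouring using at least $\bigl(\prod_{i=1}^{\ell}(1-\tfrac{6q}{r_i})\bigr)\chim{q}(G,S)$ non-zero colours (reindexing $\mathtt{r}$ so that the head consumed in the current round carries index $1$). I may assume $r_i>6q$ for every $i\le k$, so that each factor $1-\tfrac{6q}{r_i}$ lies in $(0,1)$; this is the regime used by the \PTAS, and it keeps all partial products positive. The base case $\ell=0$ is the empty graph, where both sides vanish. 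For the inductive step I query $\destroyer$ for its move on the current state in time $T$ and branch according to its type.

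In the layering case $\destroyer$ selects a layering $\lambda$ with $\head(\mathtt{r})=r_1$, and here I would connect the single-interval move of the Baker game to the $(\lambda,r_1,m)$-stratification of Lemma~\ref{lem:approx}. For a fixed offset $m\in\{0,\dots,r_1-1\}$ the graph $G_m$ is the disjoint union of the subgraphs $G[\lambda^{-1}(I)]$ over the intervals $I$ of $r_1$ consecutive layers aligned with $m$, and each such subgraph is precisely a state $\preserver$ could force, so $\destroyer$ wins on it in at most $\ell-1$ rounds. I would recurse on every piece with its inherited boundary $(S\cup S_m)\cap V(\cdot)$ and glue the results using pairwise disjoint non-zero palettes; since $\chim{q}(\cdot,\cdot)$ is additive over components, this yields an $(S\cup S_m)$-composable colouring of $G_m$ whose non-zero count is at least $\bigl(\prod_{i=2}^{\ell}(1-\tfrac{6q}{r_i})\bigr)\chim{q}(G_m,S\cup S_m)$, which extends to $G$ by colouring the deleted edges $0$. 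Trying all $m$ and keeping the colouring with the most non-zero colours, the offset guaranteed by Lemma~\ref{lem:approx} ensures $\chim{q}(G_m,S\cup S_m)\ge(1-\tfrac{6q}{r_1})\chim{q}(G,S)$, which recovers the missing factor.

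The vertex-deletion case is where the main difficulty lies. Deleting $\destroyer$'s chosen $v$, recursing on $G-v$ with $S'=(S\setminus\{v\})\cup N(v)$, and recolouring the edges at $v$ to $0$ produces an $S$-composable colouring of $G$, but Lemma~\ref{lem:delv} only promises the additive bound $\chim{q}(G-v,S')\ge\chim{q}(G,S)-q$, not the multiplicative factor $1-\tfrac{6q}{r_1}$ that the induction needs. The additive loss is harmless exactly when $\chim{q}(G,S)$ is large, so I would also run the algorithm of Observation~\ref{obs:fpt} with threshold $s_0=\lceil\max_{i\le k}r_i/6\rceil$ and output whichever of the two colourings uses more non-zero colours. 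If $\chim{q}(G,S)<s_0$ then Observation~\ref{obs:fpt} returns an optimal colouring, which beats the target since the product is at most $1$; if $\chim{q}(G,S)\ge s_0\ge r_1/6$ then the multiplicative form of Lemma~\ref{lem:delv} (with $\eps=\tfrac{6q}{r_1}$) applies and the recursive branch already meets the target. Taking the maximum thus succeeds in both regimes, and this interleaving with Observation~\ref{obs:fpt} is the key device that rescues the deletion step.

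Finally, for the running time I would bound the total size of all states at recursion depth $t$ by $R^{t}|G|$, where $R=\max_{i\le k}r_i$: a deletion node has a single smaller child, while a layering node spawns, over the $\le R$ offsets $m$, families of pieces that each partition the current vertex set, so the total size grows by a factor of at most $R$ per level. Since the depth is at most $k$, there are $O_{\mathtt{r},k}(|G|)$ states in total, and at each I spend time $T$ for $\destroyer$'s move plus $O_{\mathtt{r},k,q}(\cdot)$ for the stratifications and the call to Observation~\ref{obs:fpt}; summing yields the claimed $O_{\mathtt{r},k,q}(|G|T)$ bound. The obstacle to get right is precisely the deletion step: recognising that the additive guarantee of Lemma~\ref{lem:delv} must be combined with exact solving below a threshold $s_0$, and that $s_0$ can be chosen uniformly so the same case analysis is valid at every level of the recursion.
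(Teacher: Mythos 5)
Your proposal is correct and follows essentially the same route as the paper: simulate $\destroyer$'s strategy recursively, handle the layering move via Lemma~\ref{lem:approx} by trying all offsets $m$ and gluing the per-piece colourings, and rescue the additive loss of Lemma~\ref{lem:delv} in the deletion move by invoking Observation~\ref{obs:fpt} below a threshold. The only (immaterial) difference is that the paper runs Observation~\ref{obs:fpt} at the top of every call with the head-dependent threshold $s=\lceil r_1/3\rceil$ and returns early when optimal, whereas you use a uniform threshold and take the maximum only in the deletion case.
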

\begin{proof}
First, we run the algorithm from Observation~\ref{obs:fpt} with $s=\lceil r_1/3\rceil$.
If the obtained colouring uses less than $s$ non-zero colours, it is optimal and we return it.
Otherwise, we know that $\chim{q}(G,S)\ge s$.  In particular, $E(G)\neq\emptyset$, and thus $\destroyer$ have not won the game yet.

Let $R=\Bigl(\prod_{i=2}^k \bigl(1-\tfrac{6q}{r_i}\bigr)\Bigr)$.
Let us now consider two cases depending on $\destroyer$'s move from the state $(G,\mathtt{r})$.
\begin{itemize}
\item Suppose that $\destroyer$ decides to delete a vertex $v\in V(G)$.
We apply the algorithm recursively for the graph $G-v$, set $S'=(S\setminus\{v\})\cup N(v)$, and the sequence $\tail(\texttt{r})$, obtaining an $S'$-composable edge $q$-colouring $f$ of $G-v$ using at least $R\cdot\chim{q}(G-v,S')$ non-zero colours.
By Lemma~\ref{lem:delv} with $\varepsilon=\tfrac{q}{s}$, we conclude that $f$ uses at least 
$$R\cdot\chim{q}(G-v,S')\ge R(1-\varepsilon)\chim{q}(G,S)\ge R\Bigl(1-\frac{6q}{r_1}\Bigr)\chim{q}(G,S)$$
non-zero colours.  
We turn $f$ into an $S$-composable edge $q$-colouring of $G$ by giving all edges incident on $v$ colour $0$ and return it.
\item Suppose that $\destroyer$ chooses a layering $\lambda$.  We now recurse into several subgraphs, each
corresponding to a valid move of $\preserver$.  For each $m\in\{0,\ldots, r_1-1\}$, let
$(G_m,S_m)$ be the $(\lambda,r_1,m)$-stratification of $G_m$.  
Note that $G_m$ is divided into parts $G_{m,1}$, \ldots, $G_{m,t_m}$, each contained in the union of $r_1$ consecutive layers of $\lambda$.  
For each $m\in\{0,\ldots,r_1-1\}$ and each $i\in\{1,\ldots,t_m\}$, we apply the algorithm recursively for the graph $G_{m,i}$, set $S_{m,i}=(S_m\cup S)\cap V(G_{m,i})$, and the sequence $\tail(\texttt{r})$, obtaining an $S_{m,i}$-composable edge $q$-colouring $f_{m,i}$ of $G_{m,i}$ using at least $R\cdot\chim{q}(G_{m,i},S_{m,i})$ non-zero colours.  
Let $f_m$ be the union of the colourings $f_{m,i}$ for $i\in \{1,\ldots,t_m\}$ and observe that $f_m$ is an $(S\cup S_m)$-composable edge $q$-colouring of $G_m$ using at least $R\cdot\chim{q}(G_m,S\cup S_m)$ non-zero colours.  
We choose $m\in\{0,\ldots, r_1-1\}$ such that $f_m$ uses the largest number of non-zero colours, extend it to an $S$-composable edge $q$-colouring of $G$ by giving all edges of $E(G)\setminus E(G_m)$ colour $0$, and return it.
By Lemma~\ref{lem:approx}, the colouring uses at least
$$R\cdot\chim{q}(G_m,S\cup S_m)\ge R\Bigl(1-\frac{6q}{r_1}\Bigr)\chim{q}(G,S)$$ non-zero colours, as required.
\end{itemize}
For the time complexity, note that each vertex and edge of $G$ belongs to at most $\prod_{i=1}^d r_i$ subgraphs processed at depth $d$ of the recursion, and since the depth of the recursion is bounded by $k$, the sum of the sizes of the processed subgraphs is $O_{\mathtt{r},k,q}(|G|)$.  
Excluding the recursion and time needed to select $\destroyer$'s moves, the actions described above can be performed in linear time.  Consequently, the total time complexity
is $O_{\mathtt{r},k,q}(|G|T)$.\qed
\end{proof}

Our main result is then just a simple combination of this lemma with Theorem~\ref{thm:baker}.

\begin{theorem}
\label{thm:main}
There exists an algorithm that given an $F$-minor-free graph $G$ and integers $q,p\ge 2$, returns in time $O_{F,p,q}(|G|^2)$ an edge $q$-colouring of $G$ using at least $(1-1/p)\chim{q}(G)$ colours.
\end{theorem}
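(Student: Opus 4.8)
The plan is to obtain the statement as a direct instantiation of Lemma~\ref{lem:main} with $S=\emptyset$, feeding in the winning $\destroyer$ strategy supplied by Theorem~\ref{thm:baker}. By Observation~\ref{obs:max-q-col-with-saturated-col} we have $\chim{q}(G)=\chim{q}(G,\emptyset)$, and an $\emptyset$-composable edge $q$-colouring is simply an edge $q$-colouring, whose number of non-zero colours lower-bounds its total number of colours. Hence, once we choose a sequence $\mathtt{r}=r_1,r_2,\dots$ for which $\destroyer$ wins the Baker game on every $F$-minor-free graph in at most $k$ rounds and for which $\prod_{i=1}^{k}\bigl(1-\tfrac{6q}{r_i}\bigr)\ge 1-\tfrac1p$, Lemma~\ref{lem:main} applied to $(G,\emptyset,q,\mathtt{r})$ returns an edge $q$-colouring using at least $(1-1/p)\chim{q}(G)$ colours, and the whole problem reduces to selecting $\mathtt{r}$ correctly.

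The hard part is a circular dependency: the round bound $k$ produced by Theorem~\ref{thm:baker} is a function $k(F,\mathtt{r})$ of the sequence itself, so we cannot first fix $k$ and then tune finitely many $r_i$ to push the product above $1-1/p$. I would sidestep this by choosing an infinite sequence whose infinite product already clears the threshold. Concretely, set $r_i=12qp\cdot 2^{i}$, so that each factor $1-\tfrac{6q}{r_i}=1-\tfrac{1}{2p\cdot 2^{i}}$ lies strictly in $(0,1)$ and $\sum_{i\ge 1}\tfrac{6q}{r_i}=\tfrac{1}{2p}$. By the elementary inequality $\prod_i(1-x_i)\ge 1-\sum_i x_i$ for $x_i\in[0,1]$, the infinite product satisfies $\prod_{i=1}^{\infty}\bigl(1-\tfrac{6q}{r_i}\bigr)\ge 1-\tfrac{1}{2p}\ge 1-\tfrac1p$. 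Since every factor is positive and less than $1$, truncating to the first $k$ terms only increases the product, so $\prod_{i=1}^{k}\bigl(1-\tfrac{6q}{r_i}\bigr)\ge\prod_{i=1}^{\infty}\bigl(1-\tfrac{6q}{r_i}\bigr)\ge 1-\tfrac1p$ for the $k=k(F,\mathtt{r})$ handed to us by Theorem~\ref{thm:baker}, regardless of its value. (All $r_i\ge 2$, as required by the stratification.)

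With $\mathtt{r}$ fixed as above, Theorem~\ref{thm:baker} supplies both the finite round bound $k$ and an algorithm that, after an $O_F(n^2)$ preprocessing step (with $n=|V(G)|$), determines each of $\destroyer$'s moves in time $T=O_{F,\mathtt{r}}(n)$. Plugging this into Lemma~\ref{lem:main} with $S=\emptyset$ yields the desired colouring in time $O_{\mathtt{r},k,q}(|G|\,T)$, where the returned colouring uses at least $\bigl(\prod_{i=1}^{k}(1-\tfrac{6q}{r_i})\bigr)\chim{q}(G,\emptyset)\ge(1-1/p)\chim{q}(G)$ non-zero colours by the previous paragraph.

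For the running time, note that $\mathtt{r}$ and $k$ depend only on $F$, $p$, and $q$, so $T=O_{F,\mathtt{r}}(n)=O_{F,p,q}(|G|)$ and the bound of Lemma~\ref{lem:main} becomes $O_{\mathtt{r},k,q}(|G|\,T)=O_{F,p,q}(|G|^2)$; the one-time preprocessing cost $O_F(n^2)=O_F(|G|^2)$ is dominated by this. (If one prefers to preprocess each subgraph arising in the recursion separately, the total preprocessing is $\sum_H O_F(|V(H)|^2)\le n\cdot\sum_H|V(H)|=O_{F,p,q}(|G|^2)$, since the sizes of the processed subgraphs sum to $O_{\mathtt{r},k,q}(|G|)$, so the quadratic bound is unaffected.) This gives the claimed $O_{F,p,q}(|G|^2)$ algorithm, and the only genuinely delicate point is the convergent-product trick that decouples the choice of $\mathtt{r}$ from the a priori unknown round bound $k$.
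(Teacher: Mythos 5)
Your proposal is correct and follows essentially the same route as the paper: instantiate Lemma~\ref{lem:main} with $S=\emptyset$ and a fixed infinite sequence $\mathtt{r}$ whose infinite product $\prod_{i\ge 1}\bigl(1-\tfrac{6q}{r_i}\bigr)$ already exceeds $1-1/p$, so that the bound holds for whatever round count $k$ Theorem~\ref{thm:baker} returns. The only difference is cosmetic --- you take the geometric sequence $r_i=12qp\cdot 2^i$ while the paper uses $r_i=10pqi^2$ (summing $\sum 1/i^2=\pi^2/6$) --- and your explicit discussion of the circular dependency between $\mathtt{r}$ and $k$ is a nice clarification of a point the paper leaves implicit.
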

\begin{proof}
Let $\mathtt{r}$ be the infinite sequence such that $r_i = 10pqi^2$ for each positive integer $i$, and let $k$ be the number of rounds in which $\destroyer$ wins Baker game from the state $(G',\mathtt{r})$ for any $F$-minor-free graph $G'$, using the strategy given by Theorem~\ref{thm:baker}.
Note that
\begin{align*}
R&=\prod_{i=1}^k \bigl(1-\tfrac{6q}{r_i}\bigr)\ge 1-\sum_{i=1}^\infty \frac{6q}{r_i}\\
&=1-\frac{3}{5p}\sum_{i=1}^\infty \frac{1}{i^2}=1-\frac{3}{5p}\cdot \frac{\pi^2}{6}\ge 1-\frac{1}{p}.
\end{align*}
Let $n=|G|$.  
After running the preprocessing algorithm from Theorem~\ref{thm:baker}, we apply the algorithm from Lemma~\ref{lem:main} with $S=\emptyset$ and $T=O_{F,\mathtt{r}}(n)=O_{F,p,q}(n)$, obtaining an edge $q$-colouring of $G$ using at least $R\cdot\chim{q}(G,\emptyset)=R\cdot\chim{q}(G)\ge (1-1/p)\chim{q}(G)$ colours, in time $O_{F,p,q}(n^2)$.\qed
\end{proof}

\section{Hardness on \texorpdfstring{$1$}{1}-apex graphs}
In this section, we study the complexity of the maximum edge $2$-colouring problem on $1$-apex graphs. 
We present a reduction from \vsat{} which is known to be $\NP$-hard~\cite{Tovey84}. 

The \emph{incidence graph} $G(\varphi)$ of a Boolean formula $\varphi$ in conjunctive normal form is the bipartite graph whose vertices are the variables appearing in $\varphi$ and the clauses of $\varphi$, and each variable is adjacent exactly to the clauses in which it appears.
A Boolean formula $\varphi$ in conjunctive normal form is called \vsat{} if 
\begin{itemize}
\item each clause of $\varphi$ contains at most three distinct literals,
\item each variable of $\varphi$ appears in exactly three clauses,
\item the incidence graph $G(\varphi)$ is planar.
\end{itemize}
In \vsat{} problem, we ask whether such a formula $\varphi$ has a satisfying assignment.

We follow the strategy used in~\cite{AdamaszekP10}, using an intermediate
\emph{maximum edge} $1,2$-\emph{colouring} problem.
The instance of this problem consists of a graph $G$, a function $g \colon V(G) \rightarrow \{1, 2\}$, and a number $t$.
An \emph{edge $g$-colouring} of $G$ is an edge colouring $f$ such that $|f(v)|\le g(v)$ for each $v\in V(G)$.
The objective is to decide whether there exists an edge $g$-colouring of $G$ using at least $t$ distinct colours.
We show the maximum edge $\{1,2\}$-colouring problem is $\NP$-hard on $1$-apex graphs by establishing a reduction from \vsat{} problem. 
We then use this result to show that the maximum edge $q$-colouring problem on planar graphs is $\NP$-hard when $q \geq 2$.
Let us start by establishing the intermediate result.

\begin{lemma}
\label{lem:hardness}
The maximum edge $\{1, 2\}$-colouring problem is $\NP$-hard even when restricted on the class of $1$-apex graphs. 
\end{lemma}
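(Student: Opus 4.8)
The plan is to reduce from \vsat{} to the maximum edge $\{1,2\}$-colouring problem, building a gadget for each variable and each clause so that the incidence graph's planarity is preserved up to the addition of a single apex vertex. The guiding intuition, following~\cite{AdamaszekP10}, is that a vertex $v$ with $g(v)=1$ forces all its incident edges to share one colour, so it acts as a ``merging'' constraint, while a vertex with $g(v)=2$ allows two colour classes to meet. I would encode a variable's truth value by which of two possible colour classes ``survives'' (uses a private colour) versus gets merged into a shared global colour, and then design clause gadgets that can achieve the target count $t$ of distinct colours if and only if at least one literal in each clause is set to true.

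The key steps, in order, would be as follows. First, I would fix the reduction's global structure: take the planar incidence graph $G(\varphi)$ and replace each variable node and each clause node by a small constant-size gadget, wiring literal occurrences along the edges of $G(\varphi)$. Second, I would introduce a single apex vertex $a$ adjacent to selected gadget vertices; its role is to provide a uniform ``colour-$0$''-style anchor that ties the gadgets together and enforces the counting argument, and crucially it is the \emph{only} vertex destroying planarity, so that $G - \{a\}$ remains planar and the whole construction is a $1$-apex graph. Third, I would define $g$ on the gadget vertices (a mixture of $1$'s and $2$'s) so that a consistent truth assignment corresponds exactly to a colouring in which a maximum number of private colours survive. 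Fourth, I would set the threshold $t$ to the exact number of colours attainable precisely when every clause is satisfied, and prove the two directions: a satisfying assignment yields an edge $g$-colouring with $\ge t$ colours (forward direction, by explicit colour assignment), and any edge $g$-colouring with $\ge t$ colours yields a satisfying assignment (backward direction, by arguing that falling short in any clause gadget forces at least one colour merge, dropping the count below $t$).

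The main obstacle I expect is the backward direction together with the planarity/apex constraint acting in tension. In the reduction of~\cite{AdamaszekP10} for general graphs one has complete freedom in wiring, but here I must keep the gadgets planar and route all the ``consistency'' constraints for a variable's three occurrences without crossings, using only the one apex vertex to absorb the remaining global connections. Designing the variable gadget so that its three literal outputs are forced to agree (true/false consistently) while the drawing stays planar is the delicate part; I would likely use a cyclic arrangement of the three clause-edges around the variable gadget and exploit the degree-$1$ vertices to propagate a single colour. Verifying that the counting is tight---that no clever colouring can ``cheat'' by reusing colours across gadgets to compensate for an unsatisfied clause---is where the bulk of the careful case analysis will go, and it hinges on making each gadget's local colour budget rigid enough that global trade-offs cannot recover a lost colour.
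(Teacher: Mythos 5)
Your high-level plan matches the paper's: reduce from \vsat{}, use $g(v)=1$ vertices to merge colours and $g(v)=2$ vertices to let two colour classes meet, use a single apex vertex as a global colour-$0$ anchor, and set the threshold so that $\ge t$ colours is equivalent to satisfiability. However, the proposal stops exactly where the proof begins: no gadget is actually constructed, and the one concrete design decision you do commit to --- ``designing the variable gadget so that its three literal outputs are forced to agree (true/false consistently)'' --- is not what the paper does, and is a strictly harder goal than what is needed. The paper's construction splits each variable $x_j$ into three \emph{independent} occurrence vertices $x_{j,1},x_{j,2},x_{j,3}$ with $g=1$, one per clause containing $x_j$, and adds a conflict vertex $n_{j,a,b}$ (with $g=2$) joined to $x_{j,a}$ and $x_{j,b}$ \emph{only when $x_j$ appears with opposite signs} in the two clauses. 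No agreement among the three occurrences is enforced at all; the only constraint is that two occurrences of opposite polarity cannot both carry a private (non-zero) clause colour, because the conflict vertex between them sees colour $0$ from the apex plus at most one other colour. That weaker constraint is exactly enough to make the derived partial assignment well-defined (you never set $x_j$ to both values), which is all the backward direction requires. Since each clause vertex has $g=2$ and an apex edge of colour $0$, it can host at most one private colour, so $m+1$ colours force one satisfied literal per clause.

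Concretely, the gaps are: (i) the variable gadget is unspecified, and your stated target for it (full consistency of three occurrences under planarity) is the part you yourself flag as unresolved --- the paper sidesteps it entirely with the opposite-sign-only conflict vertices, whose addition trivially preserves planarity of $H-u$ (it amounts to replacing a degree-$3$ vertex by a subdivided triangle); (ii) the role of $g(u)=1$ on the apex, which forces colour $0$ on every apex edge and thereby burns one of the two colour slots at every clause vertex and every conflict vertex, is the load-bearing counting mechanism and is only gestured at; and (iii) the threshold $t=m+1$ and the propagation argument (a non-zero colour on an edge $(n_{j,a,b},x_{j,k})$ must, via $g(x_{j,k})=1$, also appear on the edge to the clause, so every non-zero colour is accounted for at some clause) are absent. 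Without these, neither direction of the equivalence can be checked, so the proposal as written does not yet constitute a proof.
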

\begin{proof}
Consider a given \vsat{} formula $\varphi$ with $m$ clauses and $n$ variables and a plane drawing of its incidence graph $G(\varphi)$.
Let the clauses of $\varphi$ be $c_1$, \ldots, $c_m$ and the variables $x_1$, \ldots, $x_n$; we use the same symbols for the corresponding vertices of $G(\varphi)$.

Let $H$ be a graph obtained from $G(\varphi)$ as follows.
For all $j \in \{1,2, \dots, n\}$, if the clauses in which $x_j$ appears are $c_{\ell_{j,1}}$, $c_{\ell_{j,2}}$, and $c_{\ell_{j,3}}$, split $x_j$ to three vertices $x_{j,1}$, $x_{j,2}$, and $x_{j,3}$, where $x_{j,a}$ is adjacent to $c_{\ell_{j,a}}$ for $a\in \{1,2,3\}$.  
For $1\le a<b\le 3$, add a vertex $n_{j,a,b}$ and if $x_j$ appears positively in $c_{\ell_{j,a}}$ and negatively in $c_{\ell_{j,b}}$ or vice versa, make it adjacent to $x_{j,a}$ and $x_{j,b}$ (otherwise leave it as an isolated vertex).
Finally, we add a new vertex $u$ adjacent to $c_i$ for $i\in\{1,\ldots,m\}$ and to $n_{j,a,b}$ for $j\in\{1,\ldots,n\}$ and $1\le a<b\le 3$.
Clearly, $H$ is a $1$-apex graph, since $H-u$ is planar.

Let us define the function $g \colon V(H) \rightarrow \{1, 2\}$ as follows:
\begin{itemize}
\item $g(u) = 1$,
\item $g(c_i) = 2$ for all $i \in \{1, 2, \dots , m\}$,
\item $g(x_{j,a}) = 1$ for all $j \in \{1, 2, \dots , n\}$ and $a\in\{1,2,3\}$, and
\item $g(n_{j,a,b}) = 2$ for all $j \in \{1, 2, \dots , n\}$ and $1\le a < b\le 3$.
\end{itemize}

First, we show if there exists a satisfying assignment for the formula $\varphi$, then $H$ has an edge $g$-colouring using $n+1$ colours.
For $i\in\{1,\ldots, n\}$, choose a vertex $x_{j,a}$ adjacent to $c_i$ such that the (positive or negative) literal of $c_i$ containing the variable $x_j$ is true in the assignment, and give colour $i$ to the edge $(c_i,x_{j,a})$ and all other edges incident on $x_{j,a}$ (if any).  
All other edges receive colour $0$.

Clearly, $u$ is only incident with edges of colour $0$, for each $j\in\{1,\ldots,n\}$ and $a\in\{1,\ldots,3\}$ all edges incident on $x_{j,a}$ have the same colour, and for each $i\in\{1,\ldots,m\}$, the edges incident on $c_i$ have colours $0$ and $i$.  
Finally, consider a vertex $n_{j,a,b}$ for some $j\in\{1,\ldots,n\}$ and $1\le a<b\le 3$ adjacent to $x_{j,a}$ and $x_{j,b}$.  
By the construction of $H$, the variable $x_j$ appears positively in $c_{\ell_{j,a}}$ and negatively in $c_{\ell_{j,b}}$ or vice versa, and thus at most one of the corresponding literals is true in the assignment.  
Hence, $n_{j,a,b}$ is incident with edges of colour $0$ and of at most one of the colours $\ell_{j,a}$ and $\ell_{j,b}$.

Conversely, suppose that there exists an edge $g$-colouring $f$ of $H$
using at least $m+1$ distinct colours, and let us argue that there exists a satisfying assignment for $\varphi$.
Since $g(u)=1$, we can without loss of generality assume that each
edge incident with $u$ has colour $0$.  
If a colour $c\neq 0$ is used to colour the edge $(n_{j,a,b}, x_{j,k})$ for some $j\in\{1,\ldots,n\}$, $1\le a <b\le 3$, and $k\in\{a,b\}$, then since $g(x_{j,k}) = 1$, this colour is also used on the edge $(x_{j,k}, c_{\ell_{j,k}})$. 
Hence, every non-zero colour appears on an edge incident with a clause.  
Since each clause is also joined to $u$ by an edge of colour $0$, it can be only incident with edges of one other colour.  
Since $f$ uses at least $m+1$ colours, we can without loss of generality assume that for $i\in\{1,\ldots,m\}$, there exists an edge $(c_i, x_{j,a})$ for some $j\in\{1,\ldots,n\}$ and $a\in\{1,\ldots,3\}$ of colour $i$.
Assign to $x_j$ the truth value that makes the literal of $c_i$ in which it appears true.

We only need to argue that this rule does not cause us to assign to $x_j$ both values true and false.
If that were the case, then there would exist $1\le a<b\le 3$ such that the variable $x_j$ appears positively in clause $c_{\ell_{j,a}}$ and negatively in clause $c_{\ell_{j,b}}$ or vice versa, the edge corresponding to the variable $x_{j,a}$ has colour $\ell_{j,a}$ and the edge corresponding to the variable $x_{j,b}$ has colour $\ell_{j,b}$.  
However, since $g(x_{j,a})=g(x_{j,b})=1$, this would imply that $n_{j,a,b}$ is incident with the edge $(n_{j,a,b}, x_{j,a})$ of colour $\ell_{j,a}$, the edge $(n_{j,a,b},x_{j,b})$ of colour $\ell_{j,b}$, and the edge $(n_{j,a,b}, u)$ of colour $0$, which is a contradiction.

Therefore, we described how to transform in polynomial time a \vsat{} instance $\varphi$ to an equivalent instance $H$, $g$, $t=m+1$ of the maximum edge $\{1,2\}$-colouring problem.\qed
\end{proof}

Now we are ready to prove the main theorem of this section.
The proof strategy is similar to the $\APX$-hardness proof in~\cite{AdamaszekP10}.
We include the details for completeness. 
\begin{theorem}
\label{thm:hardness}
For an arbitrary integer $q \geq 2$ the maximum edge $q$-colouring problem is $\NP$-hard even when the input instance is restricted to $1$-apex graphs. 
\end{theorem}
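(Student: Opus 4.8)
The plan is to reduce from the maximum edge $\{1,2\}$-colouring problem, which is $\NP$-hard on $1$-apex graphs by Lemma~\ref{lem:hardness}, mirroring the strategy of~\cite{AdamaszekP10}. Given an instance $(H,g,t)$ with $H$ a $1$-apex graph and $H-u$ planar for the apex vertex $u$, I would build a graph $H'$ with the \emph{uniform} bound $q$ that simulates the non-uniform bound $g$. The idea is to attach to every vertex $v$ a ``capacity-blocking'' gadget that is forced to occupy $q-g(v)$ of the $q$ colours available at $v$, so that only $g(v)$ colours remain usable on the edges inherited from $H$; a uniform edge $q$-colouring of $H'$ then behaves on $E(H)$ exactly like an edge $g$-colouring of $H$.

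Concretely, for each $v\in V(H)$ I would attach $q-g(v)$ disjoint \emph{forcers}, where a single forcer is a new vertex $y$ joined to $v$ and to $q-1$ fresh leaves, so that $y$ has degree exactly $q$. Two points are then immediate. First, $H'$ is still $1$-apex: each forcer is a tree hanging off a single vertex, so the forcers rooted in the planar part $H-u$ can be drawn inside small discs around their roots, while the forcers rooted at $u$ split into isolated trees once $u$ is removed; hence $H'-u$ is planar. Second, setting $K=q\sum_{v}(q-g(v))$, the natural target threshold is $t'=t+K$ (for $q=2$ only the vertices with $g(v)=1$ receive a single forcer, and the construction has size $O_q(|H|)$, so the reduction is polynomial).

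For correctness I would prove $\chim{q}(H')=\chi_g(H)+K$, where $\chi_g(H)$ denotes the maximum spread of an edge $g$-colouring of $H$. The completeness direction is routine: given an optimal edge $g$-colouring of $H$, keep it on $E(H)$ and colour the $q$ edges of every forcer with $q$ brand-new private colours; each $v$ then sees its at most $g(v)$ old colours together with one new colour per forcer, i.e.\ at most $q$ colours in total, and the spread increases by precisely $K$, giving $\chim{q}(H')\ge \chi_g(H)+K$. For soundness I would start from a \emph{maximum} edge $q$-colouring $f'$ of $H'$ and argue that it may be assumed to be in a normal form in which every forcer is ``privately rainbow'', meaning its $q$ edges carry $q$ distinct colours that appear on no other edge of $H'$. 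Granting this, the $q-g(v)$ forcer edges at $v$ occupy $q-g(v)$ of the $q$ colours at $v$, so the restriction of $f'$ to $E(H)$ is a genuine edge $g$-colouring; since the forcers account for exactly the $K$ private colours, this restriction has spread $\chim{q}(H')-K\le\chi_g(H)$, closing the reduction with $t'=t+K$.

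I expect the normal-form step to be the main obstacle, since a forcer edge $(v,y)$ is not \emph{forced} to carry a private colour: in principle $f'$ could reuse on $(v,y)$ a colour already present on an edge of $H$ incident with $v$, in which case the forcer fails to consume capacity and the restriction to $E(H)$ need not respect $g$. The resolution has to exploit maximality through a local exchange argument (as in~\cite{AdamaszekP10}): whenever a forcer colour coincides with the colour of a real edge at $v$, one recolours that forcer edge with a fresh colour while simultaneously merging a now-redundant colour at $v$, verifying that this violates no degree constraint at $v$, at $y$, or at the affected neighbours, and does not decrease the total number of colours. Making this exchange terminate with no net loss of colours, together with the easier preliminary step that all leaf edges may be assumed to carry distinct private colours, is the technical heart of the proof; the $1$-apex bookkeeping and the two counting inequalities above are then straightforward.
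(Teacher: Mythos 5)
Your overall plan---reducing from the maximum edge $\{1,2\}$-colouring problem of Lemma~\ref{lem:hardness} by attaching to each vertex $v$ a pendant gadget that consumes $q-g(v)$ of the $q$ colour slots at $v$---is the same as the paper's, and your completeness direction and $1$-apex bookkeeping are fine. The gap is exactly where you say the ``technical heart'' is: your soundness direction needs the inequality $\chim{q}(H')\le (\text{max spread of an edge }g\text{-colouring of }H)+K$, and this rests entirely on the ``privately rainbow'' normal-form claim, which you describe but do not prove. It is genuinely nontrivial: the exchange you sketch (give the forcer edge $(v,y)$ a fresh colour, then merge the now-excess colour $c'$ at $v$ into another colour $c$ at $v$) is a \emph{global} recolouring of all edges of colour $c'$, and this can turn a forcer edge at some other vertex $w$ that previously carried $c'$ (harmlessly) into one carrying $c$, which may now coincide with a colour on an $H$-edge at $w$---even if $w$ was already processed. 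So neither termination nor ``no net loss of colours over the whole process'' follows from the local step, and no potential function is offered. Without this lemma the reduction is not established.

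The paper's proof sidesteps the normal form entirely, in two ways you could adopt. First, the gadget is just $q-g(v)$ pendant \emph{leaves} attached directly to $v$ (no hub vertex, no extra leaves), and the threshold is raised by exactly the number of pendant vertices, $r+(q-2)n$ with $r=|\{v:g(v)=1\}|$. Second, the soundness argument is \emph{lossy} rather than exact: it deletes the pendant vertices one at a time, and when deleting a pendant $u$ at $v$ whose edge has colour $c'$, if $v$ still sees some colour $c\neq c'$ it recolours all remaining edges of colour $c'$ to $c$. Each such step destroys at most one colour globally and removes one colour from the palette of $v$, so after all pendants are gone every vertex $v$ sees at most $g(v)$ colours and at least $t$ colours survive---the losses are simply absorbed by the threshold offset instead of being prevented. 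No claim about the structure of an optimal colouring, and hence no exchange/termination argument, is needed. If you keep your spider gadget you must either actually prove the normal-form lemma or restructure your soundness step in this lossy style (which your gadget also permits, with the threshold offset equal to the total number of gadget edges).
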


\begin{proof}
We construct a reduction from the maximum edge $\{1,2\}$-colouring problem on $1$-apex graphs.
Let $G$, $g$, $t$ be an instance of this problem, and let $n=|V(G)|$ and $r=|\{v \in V(G) \colon g(v) = 1\}|$.
We create a graph $G'$ from $G$ by adding for each vertex $v \in V(G)$ exactly $q - g(v)$ pendant vertices adjacent to $v$.
Clearly, $G'$ is an $1$-apex graph.  We show that $G$ has an edge $g$-colouring using at least $t$ distinct colours if and only if $G'$ has an edge $q$-colouring using at least $t + r + (q - 2)n$ colours.

In one direction, given an edge $g$-colouring of $G$ using at least $t$ colours,  we colour each of the added pendant edges using a new colour, obtaining an edge $q$-colouring of $G$ using at least $t + r + (q - 2)n$ colours.

Conversely, let $f$ be an edge $q$-colouring of $G'$ using at least $t + r + (q - 2)n$ colours.
Process the vertices $v\in V(G)$ one by one, performing for each of them the following operation:
For each added pendant vertex $u$ adjacent to $v$ in order, let $c'$ be the colour of the edge $(u,v)$, delete $u$, and if $v$ is incident with an edge $e$ of colour $c\neq c'$, then recolour all remaining edges of colour $c'$ to $c$.  
Note that the number of eliminated colours is bounded by the number $r + (q - 2)n$ of pendant vertices, and thus the resulting colouring still uses at least $t$ colours.  
Moreover, at each vertex $v\in V(G)$, we either end up with all edges incident on $v$ having the same colour or we eliminated one colour from the neighbourhood of $v$ for each adjacent pendant vertex; in the latter case, since $|f(v)|\le q$ and $v$ is adjacent to $q-g(v)$ pendant vertices, at most $g(v)$ colours remain on the edges incident on $v$.
Hence, we indeed obtain an edge $g$-colouring of $G$ using at least $t$ colours. \qed
\end{proof}

\section{Future directions}
We conclude with some possible directions for future research.
The maximum edge $2$-colouring problem on $1$-apex graphs is $\NP$-hard. 
But the complexity of the problem is unknown when the input is restricted to planar graphs. 
We consider this an interesting question left unanswered.
The best-known approximation ratio is known to be $2$, without any restriction on the input instances. 
Whereas, a lower bound of $(\frac{1+q}{q})$, for $q \geq 2$ is known assuming unique games conjecture. 
There are not many new results reported in the last decade that bridge this gap.
We think, even a $(2 - \eps)$ algorithm, for any $\varepsilon > 0$, will be a huge progress towards that direction. 
The Baker game technique can yield $\PTAS$es for monotone optimization problems beyond problems expressible in the first-order logic.
Clearly, the technique can't be extended to the entire class of problems expressible in the monadic second-order logic.
It will be interesting to characterise the problems expressible in the monadic second-order logic where the Baker game yield $\PTAS$es.

\section{Acknowledgement}
The second author likes to thank Benjamin Moore, Jatin Batra, Sandip Banerjee and Siddharth Gupta for helpful discussions on this project. 
He also likes to thank the organisers of Homonolo for providing a nice and stimulating research environment. 
%
%
\bibliographystyle{splncs04}

\end{document}